\documentclass[letterpaper]{article}

\pdfoutput=1
\usepackage{aaai18}
\usepackage{times}
\usepackage{helvet}
\usepackage{courier}

\usepackage{booktabs, amsmath, amssymb,mathtools, microtype}

\usepackage[shortlabels]{enumitem}
\usepackage[hidelinks]{hyperref}
\hypersetup{breaklinks=true}
\usepackage{cleveref}
\usepackage{aaai_hyperref}

\usepackage{kbordermatrix}

\newcommand{\rank}{\textrm{rank}}

\usepackage{amsthm}
\newtheorem{theorem}{Theorem}
\newtheorem*{theorem*}{Theorem}

\newtheorem{proposition}[theorem]{Proposition}

\usepackage{tikz, wrapfig,graphicx}
\usetikzlibrary{positioning, shapes.geometric, matrix, backgrounds, calc, decorations.pathreplacing,decorations.markings}

\renewcommand*{\le}{\leqslant}
\renewcommand*{\ge}{\geqslant}
\renewcommand*{\epsilon}{\varepsilon}

\newcommand{\pref}{\succcurlyeq}

\newcommand*{\citep}[1]{\cite{#1}}	
\newcommand*{\citet}[1]{\citeauthor{#1} \shortcite{#1}}
\newcommand*{\citeNP}[1]{\citeauthor{#1} \citeyear{#1}}

\newcommand{\boldalpha}{\boldsymbol{\alpha}}
\newcommand{\wPAV}{\ensuremath{\boldsymbol{\alpha}}\textup{-PAV}}

\newcommand\bovermat[2]{%
	\makebox[0pt][l]{$\smash{\overbrace{\phantom{%
					\begin{matrix}#2\end{matrix}}}^{\text{#1}}}$}#2}

\setlength{\pdfpagewidth}{8.5in}
\setlength{\pdfpageheight}{11in}
\pdfinfo{
	/Title (Single-Peakedness and Total Unimodularity: New Polynomial-Time Algorithms for Multi-Winner Elections)
	/Author (Dominik Peters)
}
\setcounter{secnumdepth}{1}  

\begin{document}

\title{Single-Peakedness and Total Unimodularity: \\ New Polynomial-Time Algorithms for Multi-Winner Elections} 
\author{Dominik Peters\\
	Department of Computer Science \\
	University of Oxford, UK \\
	dominik.peters@cs.ox.ac.uk
}
\copyrightyear{2018}
\maketitle
\global\csname @topnum\endcsname 0
\global\csname @botnum\endcsname 0

\begin{abstract}
The winner determination problems of many attractive multi-winner voting rules are NP-complete. However, they often admit polynomial-time algorithms when restricting inputs to be single-peaked. Commonly, such algorithms employ dynamic programming along the underlying axis. We introduce a new technique: carefully chosen integer linear programming (IP) formulations for certain voting problems admit an LP relaxation which is totally unimodular if preferences are single-peaked, and which thus admits an integral optimal solution. This technique gives efficient algorithms for finding optimal committees under Proportional Approval Voting (PAV) and the Chamberlin--Courant rule with single-peaked preferences, as well as for certain OWA-based rules. For PAV, this is the first technique able to efficiently find an optimal committee when preferences are single-peaked. An advantage of our approach is that no special-purpose algorithm needs to be used to exploit structure in the input preferences: any standard IP solver will terminate in the first iteration if the input is single-peaked, and will continue to work otherwise.
\end{abstract}

\section{Introduction}
\label{sec:intro}

In a departure from classical voting theory, a growing literature from computational social choice has recently studied \emph{multi-winner} voting rules \citep{multiwinner-trends}. Given diverse preferences of a collection of agents, instead of identifying a single best alternative, we are aiming for a (fixed-size) \emph{set} of alternatives that jointly are able to represent the preferences of the agents best. Such procedures are useful in a wide variety of circumstances: obvious examples include the election of a parliament, or of a committee representing the interests of members of an organisation. Other applications can be found in group recommendation systems, or for making decisions about which products or services to offer: Which courses should be offered at a university? Which movies should be presented on an airline entertainment system?

%
%
%

Several attractive rules for such tasks have been designed by researchers in political science (e.g., \citeNP{cha-cou:j:cc}, \citeNP{monroe1995fully}) and more recently by computer scientists \cite{faliszewski2016plurality,faliszewski2016hierarchy,skowron2015owa}. Many of these rules are defined in terms of some objective function: a winning committee is a set of $k$ candidates that maximises this objective. Unsurprisingly, then, the winner determination problems of such rules are typically NP-hard \cite{bou-lu:c:chamberlin-courant}. To tackle the complexity of these problems, approximation algorithms \cite{skowron2015achieving} and fixed-parameter tractability approaches \cite{betzler2013computation,bredereck2015elections} have been developed for these problems, and integer programming formulations have also been designed for them \cite{potthoff1998proportional}.

Another approach 
seeks to exploit underlying structure in the preferences reported by the agents \citep{ELP-trends}. A particularly influential preference restriction is the notion of \emph{single-peaked preferences}, due to \citet{black1948rationale} and \citet{arrow1950difficulty}. In this model, the alternative space has a one-dimensional structure: alternatives are ordered on a left-to-right \emph{axis}; and agents' preferences are monotonically decreasing as we move further away from their \emph{peak} (most-preferred alternative). In particular, we can expect preferences to be structured this way when voting over the value of a numerical quantity (such as a tax rate). While single-peaked preferences were first employed to escape impossibility results in social choice theory \cite{mou:b:axioms}, it also yields positive algorithmic results: Notably, Betzler et al.\ \shortcite{betzler2013computation} showed that Chamberlin--Courant's \shortcite{cha-cou:j:cc} multi-winner rule is efficiently computable for single-peaked preferences. This result can be extended to some other multi-winner voting rules \cite{elkind2015owa}, and to preferences that are single-crossing, or single-peaked on trees \cite{skowron2013complexity,elkind2014characterization,peters2016nicetrees}.

The Chamberlin--Courant rule usually takes as input preferences specified by linear orders, i.e., rankings of the candidates. An alternative way of specifying preferences is by \emph{approval ballots}, where voters submit a \emph{set} of approved candidates. Recent work has uncovered a rich variety of multi-winner voting rules in this framework \cite{kilgour2010approval}.
A particularly well-studied rule is \emph{Proportional Approval Voting} (PAV) due to \citet{Thie95a}.
PAV has particularly nice axiomatic properties in terms of proportional representation \cite{aziz2017justified}, and has recently been axiomatically characterised as the only consistent extension of d'Hondt's proportionality criterion for party list elections to the general setting of approval votes (\citeNP{lackner2017consistent}, see also \citeNP{brill2017apportionment}).
Like for Chamberlin--Courant, the winner determination problem of PAV is NP-complete \cite{aziz2015computational}.
For PAV, however, tractability results for cases of structured preferences have proven somewhat elusive.
\citet{elkind2015structure} have studied this question in depth, but for single-peaked preferences\footnote{For approval preferences, single-peakedness requires that there is a left-to-right ordering of the alternatives so that each approval set forms an interval of this ordering \cite{faliszewski2011shield,elkind2015structure}.}
they were only able to establish that the problem lies in FPT for some natural parameters, and that it becomes polynomial-time solvable for a very restrictive subclass of single-peaked approval preferences.
The difficulty is that standard approaches based on dynamic programming seem to not be powerful enough to capture the \emph{multirepresentation} nature of PAV: Under the PAV rule, voters derive (implicit) utility from potentially \emph{many} committee members, whereas in the case of Chamberlin--Courant, they are only represented by one member of the committee.
This difficulty led \citet{elkind2015structure} to conjecture that evaluating PAV is NP-complete even for single-peaked preferences.

In this paper, we introduce a new technique that allows evaluating a large class of multi-winner rules in polynomial time if the input preferences are single-peaked. This class includes PAV; thus we disconfirm Elkind and Lackner's conjecture and show that PAV does become tractable with single-peaked preferences. Other rules in this class include Chamberlin--Courant, the $t$-Borda rules \citep{paths}, and most OWA-based multi-winner rules \cite{skowron2015owa}.

Our technique is based on designing integer linear programming formulations for these rules, and proving that these formulations are \emph{totally unimodular} in case the input preferences are single-peaked. Total unimodularity is a condition on the derminants of the matrix of coefficients appearing in the constraints of the integer program. Since totally unimodular integer programs are optimally solved by their linear programming relaxations, these rules are polynomial-time computable with such inputs. In fact, since all standard IP solvers first solve the LP relaxation, they will terminate with the correct answer in their first iteration. If the instance is not single-peaked, the IP solver might enter further iterations while solving: our formulations are correct whether or not the input is single-peaked. This makes it easy to implement our algorithms, and stands in contrast to algorithms based on dynamic programming, which are specialised to work only under the assumption that the input is single-peaked. 

Another difference to specialised algorithms (such as the one due to Betzler et al.\ \citeyear{betzler2013computation}) is that we do not need to know an underlying single-peaked axis of the input profile in order to solve the integer program in polynomial time. Dynamic programs that work along the single-peaked axis need a preprocessing stage, where we run a \emph{recognition algorithm} which constructs a single-peaked axis if one exists. Such algorithms have been designed by \citet{escoffier2008single}, \citet{bartholdi1989voting}, and \citet{doignon1994polynomial}, among others. Our algorithms skip this step, again making implementation easier.

\section{Preliminaries}

We write $[n] = \{1,\dots,n\}$.
\smallskip

\noindent
\textbf{Total Unimodularity} \quad
A matrix $A = (a_{ij})_{ij} \in \mathbb Z^{m\times n}$ with $a_{ij} \in \{-1,0,1\}$ is called \emph{totally unimodular} if every square submatrix $B$ of $A$ has $\det B \in \{-1,0,1\}$. The following results are well-known. Proofs and much more about their theory can be found in the textbook by \citet{schrijver1998theory}.

\begin{theorem}
	\label{thm:tum-is-easy}
	Suppose $A \in \mathbb Z^{m\times n}$ is a totally unimodular matrix, $b\in \mathbb Z^m$ is an integral vector of right-hand sides, and $c\in\mathbb Q^n$ is an objective vector. Then the linear program
	\[ \text{max } c^Tx \text{ subject to } Ax \le b  \tag{P} \]
	has an integral optimum solution, which is a vertex of the polyhedron $\{x : Ax \le b\}$. Thus, the integer linear program
	\[ \text{max } c^Tx \text{ subject to } Ax \le b, x \in \mathbb Z^n \tag{IP} \]
	is solved optimally by its linear programming relaxation (P).
\end{theorem}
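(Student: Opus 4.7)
The plan is to invoke the standard vertex-based proof for the integrality of totally unimodular polyhedra. First I would observe that if the LP (P) has a finite optimum, then by standard linear programming theory that optimum is attained at a vertex $x^*$ of the polyhedron $P = \{x : Ax \le b\}$ (in the applications to voting in this paper, the variables will be explicitly bounded in $[0,1]$, so $P$ is automatically pointed and vertices exist; if not, one can add auxiliary box constraints $-M \le x \le M$ without loss of generality, noting that such constraints preserve total unimodularity of the constraint matrix).

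Next I would use the characterization of vertices: at $x^*$ there are $n$ linearly independent rows of $A$ whose inequalities are tight, giving a nonsingular $n \times n$ submatrix $A'$ of $A$ with $A' x^* = b'$, where $b' \in \mathbb Z^n$ is the corresponding subvector of $b$. The total unimodularity hypothesis then forces $\det A' \in \{-1, +1\}$.

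The integrality of $x^*$ then follows from Cramer's rule: each coordinate $x^*_i$ equals $\det(A'_i)/\det(A')$, where $A'_i$ is obtained from $A'$ by replacing its $i$-th column with $b'$. Since $A'_i$ has integer entries, the numerator is an integer, and the denominator is $\pm 1$, so $x^*_i \in \mathbb Z$. Thus $x^*$ is both optimal for (P) and feasible (and therefore optimal) for (IP), which means any algorithm that solves (P) to a vertex optimum, such as the simplex method at the root of a branch-and-bound tree, already returns a certified optimum for (IP).

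The only subtle step is the very first one — ensuring an optimal vertex exists so that the basic-feasible-solution machinery applies; everything else is a clean application of Cramer's rule plus the definition of totally unimodular. Since this is a classical result attributed to Hoffman and Kruskal, the real work of the paper will not lie here, and I would simply cite \citet{schrijver1998theory} for the full argument rather than reproducing it in detail.
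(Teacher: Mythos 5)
Your argument is correct: it is the standard Hoffman--Kruskal-style proof (optimal vertex exists after ensuring the polyhedron is pointed, nonsingular tight submatrix has determinant $\pm 1$ by total unimodularity, Cramer's rule gives integrality). The paper itself does not prove this theorem at all --- it states it as well known and defers to \citet{schrijver1998theory} --- so your sketch simply fills in the classical argument that the citation points to, and your handling of the pointedness issue (box constraints, which preserve total unimodularity) is the right way to patch the one imprecision in the theorem's statement.
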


An optimum solution to (IP) can be found in polynomial time.
We will now state some elementary results about totally unimodular matrices. 

\begin{proposition}
	\label{prop:tum-manipulations}
	If $A$ is totally unimodular, then so is
	\begin{enumerate}[(1),leftmargin=1.6em]
		\item its transpose $A^T$,
		\item the matrix $[A \mid -A]$ obtained from $A$ by appending the negated columns of $A$,
		\item the matrix $[A \mid I]$ where $I$ is the identity matrix,
		\item any matrix obtained from $A$ through permuting or deleting rows or columns.
	\end{enumerate}
\end{proposition}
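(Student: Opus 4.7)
The plan is to verify each of the four items directly from the definition, so I would pick an arbitrary square submatrix $B$ of the new matrix and show $\det B \in \{-1,0,1\}$, relying only on elementary properties of determinants (transposition, column/row swaps, column negation, and Laplace expansion).

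For (1) I would observe that every square submatrix of $A^T$ is the transpose of a square submatrix $B$ of $A$, and $\det B^T = \det B$. For (4) I would note that permuting or deleting rows/columns turns every square submatrix into a square submatrix of $A$ up to row/column permutations, and permutations only change the sign of the determinant; so total unimodularity is preserved. Both of these are essentially one-line arguments.

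For (2), the case analysis is the only thing requiring care. Given a square submatrix $B$ of $[A \mid -A]$, I split on whether some column index $j$ of $A$ is selected from both the $A$-block and the $-A$-block. If so, then two columns of $B$ are negatives of each other, so $\det B = 0$. Otherwise each selected column either comes from $A$ or is the negation of a distinct column of $A$, so $B$ is obtained from a square submatrix $B'$ of $A$ by negating some columns; thus $\det B = \pm \det B' \in \{-1,0,1\}$.

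For (3), I would take a square submatrix $B$ of $[A \mid I]$ and argue by induction on the number of columns of $B$ drawn from the identity block. If this number is $0$ then $B$ is a submatrix of $A$ and we are done. Otherwise pick such a column $c$: restricted to the rows of $B$ it contains at most one nonzero entry (a $1$), since the only nonzero entry of that column of $I$ sits in one specific row. If that row was not selected, then $c$ is the zero vector in $B$ and $\det B = 0$. Otherwise Laplace expansion along $c$ gives $\det B = \pm \det B'$ where $B'$ is a smaller submatrix of $[A \mid I]$ with one fewer column from the identity block, and induction finishes the argument.

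I do not expect any real obstacle; the only thing to watch is the bookkeeping in (2) (ruling out the duplicated-column case) and being explicit in (3) that the Laplace expansion step reduces to the inductive hypothesis. All four items are then direct consequences of standard determinant manipulations.
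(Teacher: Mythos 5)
Your argument is correct and complete: the duplicated-column case in (2) and the cofactor expansion along a unit column in (3) are exactly the standard verifications. The paper itself gives no proof of this proposition --- it states it as elementary and defers to Schrijver's textbook --- and your direct check from the definition is precisely the argument one finds there, so there is nothing to compare beyond noting that you have supplied the omitted details correctly.
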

In particular, from (3) and (4) it follows that appending a unit column $(0,\dots,1,\dots,0)^T$ will not destroy total unimodularity. Further, using these transformations, we can see that \Cref{thm:tum-is-easy} remains true even if we add to (P) constraints giving lower and upper bounds to some variables, if we replace some of the inequality constraints by equality constraints, or change the direction of an inequality.

\begingroup
\setlength{\columnsep}{5pt}
\setlength{\intextsep}{0pt}
\begin{wrapfigure}[5]{r}{0.26\columnwidth}
\hspace{-4pt}
\scalebox{0.65}{
\begin{tikzpicture}
[highlight/.style={line width=11pt, black!35!blue!20}]
\matrix (m) [matrix of math nodes,left delimiter={[},right delimiter={]},ampersand replacement=\&] {
	0 \& 0 \& 1 \& 1 \& 1 \& 0 \\
	1 \& 1 \& 1 \& 0 \& 0 \& 0 \\
	0 \& 0 \& 0 \& 0 \& 1 \& 1 \\
	0 \& 1 \& 1 \& 1 \& 1 \& 0 \\
	0 \& 0 \& 0 \& 1 \& 1 \& 0 \\
};
\begin{scope}[on background layer]
\draw[highlight] (m-1-3.west) -- (m-1-5.east);
\draw[highlight] (m-2-1.west) -- (m-2-3.east);
\draw[highlight] (m-3-5.west) -- (m-3-6.east);
\draw[highlight] (m-4-2.west) -- (m-4-5.east);
\draw[highlight] (m-5-4.west) -- (m-5-5.east);
\end{scope}		
\end{tikzpicture}
}
\end{wrapfigure}
A binary matrix $A = (a_{ij}) \in \mathbb \{0,1\}^{m\times n}$ has the \emph{strong consecutive ones property} if the $1$-entries of each row form a contiguous block, as in the example on the right. A binary matrix has the \emph{consecutive ones property} if its columns can be permuted so that the resulting matrix has the strong consecutive ones property. The key result that will allow us to connect single-peaked preferences to total unimodularity is as follows:

\endgroup

\begin{proposition}
	Every binary matrix with the consecutive ones property is totally unimodular.
\end{proposition}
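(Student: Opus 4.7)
The plan is to reduce to the case of strong consecutive ones, and then prove by induction on the size of a square submatrix that its determinant lies in $\{-1,0,1\}$ via a single row-operation step that preserves the structure.

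First I would invoke \Cref{prop:tum-manipulations}(4): since permuting columns preserves total unimodularity, I may assume without loss of generality that $A$ itself already has the strong consecutive ones property, i.e.\ the 1-entries of each row $i$ occupy a contiguous block of columns $[l_i, r_i]$. Next I would observe that this property is inherited by any square submatrix $B$ obtained by selecting a subset of rows and a subset of columns (in their original order): if columns $j_1<\dots<j_k$ are selected, then the 1-entries of row $i$ within these columns are precisely those $j_s$ with $l_i\le j_s\le r_i$, which again form a contiguous block of the selected columns. So the class of strong-consecutive-ones matrices is closed under taking square submatrices, and it suffices to prove that every square strong-consecutive-ones matrix $B \in \{0,1\}^{k\times k}$ has $\det B \in \{-1,0,1\}$.

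I would proceed by induction on $k$, with the $k=1$ base case being trivial. For the inductive step, look at the first column of $B$. If it is all zero, then $\det B = 0$ and we are done. Otherwise let $S$ be the set of rows having a $1$ in column $1$; each such row $i\in S$ necessarily has $l_i=1$ and therefore consists of $1$s on $[1, r_i]$ followed by $0$s. Pick $i^* \in S$ minimising $r_{i^*}$, and subtract row $i^*$ from every other row $i \in S$. Because $r_{i^*}\le r_i$, the two runs of $1$s cancel on $[1, r_{i^*}]$ and leave a contiguous block of $1$s on $[r_{i^*}+1, r_i]$ (possibly empty). So the transformed matrix $B'$ still has $\{0,1\}$ entries, still has the strong consecutive ones property, has the same determinant as $B$, and has a column $1$ with a single $1$ (in row $i^*$) and $0$s elsewhere.

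Expanding the determinant along this column, $\det B = \pm \det B''$ where $B''$ is the $(k-1)\times(k-1)$ matrix obtained by deleting row $i^*$ and column $1$; since $B''$ is again strong-consecutive-ones (restricting to later columns trivially preserves contiguity), the inductive hypothesis gives $\det B'' \in \{-1,0,1\}$, hence $\det B \in \{-1,0,1\}$, completing the proof. I expect the one subtle step to be checking that the subtraction truly preserves both the $\{0,1\}$ entry condition and the contiguous-block structure simultaneously, which is exactly why the choice of pivot $i^*$ with smallest right endpoint $r_{i^*}$ is important; everything else is bookkeeping.
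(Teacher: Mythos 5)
Your proof is correct. The paper itself gives no argument for this proposition --- it is stated as a well-known fact with a pointer to Schrijver's textbook --- and your induction is essentially the standard textbook proof: reduce to the strong consecutive ones form by column permutation, note closure under taking submatrices, and then use the pivot row with smallest right endpoint to clear the first column while preserving both the $\{0,1\}$ entries and the interval structure. The one step you flag as subtle (choosing $i^*$ to minimise $r_{i^*}$ so that the subtraction leaves a contiguous, nonnegative block) is indeed the crux, and you handle it correctly; the rest of the bookkeeping (determinant invariance under the row operation, cofactor expansion along the cleared column) is sound.
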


We remark that by a celebrated result of \citet{seymour1980decomposition}, it is possible to decide in polynomial time whether a given matrix is totally unimodular, though we do not use this fact. 
\smallskip

\noindent
\textbf{Single-Peaked Preferences}\quad
Let $A$ be a finite set of \emph{alternatives}, or \emph{candidates}, and let $m=|A|$.
A \emph{weak order}, or \emph{preference relation}, is a binary relation $\pref$ over $A$ that is complete and transitive. We write $\succ$ and $\sim$ for the strict and indifference parts of $\pref$. A \emph{linear order} is a weak order that, in addition, is antisymmetric, so that $x\sim y$ only if $x = y$. Every preference relation $\pref$ induces a partition of $A$ into indifference classes $A_1,\dots,A_r$ so that $A_1 \succ A_2 \succ \dots \succ A_r$ and $x \sim y$ for all $x,y\in A_t$. We will say that an alternative $a \in A_t$ has \emph{rank} $t$ in the ordering $\pref$ and write $\rank(a) = t$; thus the alternatives of rank 1 are the most-preferred alternatives under $\pref$. Finally, we say that any set of the form $\{x \in A: \rank(x) \ge t\}$ is a \emph{top-inital segment} of $\pref$.

\begingroup
\setlength{\columnsep}{5pt}
\setlength{\intextsep}{2pt}
\begin{wrapfigure}[7]{r}{0.45\columnwidth}
\scalebox{0.6}{
	\hspace{-5.8pt}
\begin{tikzpicture}[yscale=0.6,xscale=0.9]
\def\xmin{1}
\def\xmax{7}
\def\ymin{0}
\def\ymax{6}

\draw[step=1cm,black!20,very thin] (\xmin,\ymin) grid (\xmax,\ymax);

\draw[->] (\xmin -0.3,\ymin) -- (\xmax+0.4,\ymin) node[right] {};
\foreach \x/\xtext in {1/a, 2/b, 3/c, 4/d, 5/e, 6/f, 7/g}
\draw[shift={(\x,\ymin)}] (0pt,2pt) -- (0pt,-2pt) node[below] {$\strut\xtext$};
\foreach \x/\xtext in {1, 2,3,4,5,6}
\node[below] at (\x+0.5,\ymin) {$\strut\lhd$};  

\foreach \x/\y in {4/7,5/6,3/4, 6/5, 2/3, 7/2, 1/1}
\node[fill=blue, circle, inner sep=0.6mm] at (\x,\y-1) {};

\draw[thick,blue] (1,0)--(2,2)--(3,3)--(4,6) -- (5,5)--(6,4)--(7,1);

\foreach \x/\y in {4/4,5/3, 6/2, 3/5, 2/7, 7/1, 1/6}
\node[fill=green!50!black, circle, inner sep=0.6mm] at (\x,\y-1) {};

\draw[thick,green!50!black] (1,5)--(2,6)--(3,4)--(4,3) -- (5,2)--(6,1)--(7,0);  

\foreach \x/\y in {1/2,2/4,3/7, 4/6, 5/5, 6/3, 7/1}
\node[fill=red!50!black, circle, inner sep=0.6mm] at (\x,\y-1) {};

\draw[thick,red!50!black] (1,1)--(2,3)--(3,6)--(4,5) -- (5,4)--(6,2)--(7,0);

\end{tikzpicture}
}
\end{wrapfigure}
Let $\lhd$ be (the strict part of) a linear order over $A$; we call $\lhd$ an \emph{axis}. A linear order $\succ_i$ with most-preferred alternative $p$ (the \emph{peak}) is \emph{single-peaked with respect to~$\lhd$} if for every pair of candidates $a,b\in A$ with $p \lhd b\lhd a$ or $a\lhd b\lhd p$ it holds that $b \succ_i a$. For example, if the alternatives in $A$ correspond to different proposed levels of a tax, and the numbers in $A$ are ordered by $\lhd$ in increasing order, then it is sensible to expect voters' preferences over $A$ to be single-peaked with respect to $\lhd$. 

We need an equivalent definition of single-peakedness:
\begin{proposition}
	\label{prop:sp-equiv}
	A linear order $\succ$ is single-peaked with respect to~$\lhd$ if and only if all top-initial segments of $\succ$ form an interval of~$\lhd$.
\end{proposition}
\noindent
The proof is straightforward, see Figure~\ref{fig:sp-to-c1p} for an illustration. An advantage of this alternative definition is that it allows us to generalise single-peakedness to \emph{weak} orders.
Thus, we will define a weak order $\pref$ to be \emph{single-peaked} with respect to $\lhd$ exactly if all top-initial segments of $\pref$ form an interval of $\lhd$. This concept is often known as `possibly single-peaked' \cite{lackner2014incomplete} because it is equivalent to requiring that the ties in the weak order can be broken in such a way that the resulting linear order is single-peaked.

\endgroup
A \emph{profile} $P = (\pref_1,\dots,\pref_n)$ over a set of alternatives~$A$ is a list of weak orders over $A$. Each of the orders represents the preferences of a \emph{voter}; we write $N = [n]$ for the set of voters. The profile will be called \emph{single-peaked} if there exists some axis $\lhd$ over $A$ so that each order $\pref_i$ in $P$ is single-peaked with respect to $\lhd$. 

Based on Proposition~\ref{prop:sp-equiv}, we can characterise single-peakedness of a profile in terms of the consecutive ones property of a certain matrix.
Indeed, a profile is single-peaked if and only if the following matrix $M_{\text{SP}}^P$ has the consecutive ones property: take one column for each alternative, and one row for each top-initial segment $S$ of each voter's preference relation; the row is just the incidence vector of $S$. Figure~\ref{fig:sp-to-c1p} shows an example. Clearly, $M_{\text{SP}}^P$ has the consecutive ones property if and only if the condition of Proposition~\ref{prop:sp-equiv} is satisfied. This construction is due to \citet{bar-tri:j:sp}, see also \citet{fitzsimmons2014single}. Note that, because it is possible to decide whether a matrix has the consecutive ones property in linear time \cite{booth1976testing}, this yields an $O(m^2n)$ algorithm for checking whether a given profile is single-peaked; however, there are faster, more direct algorithms for this task \cite{escoffier2008single,doignon1994polynomial}. 
\smallskip

\begin{figure}[th]
	\centering
	\begin{minipage}{0.18\columnwidth}
		\begin{tabular}{cc}
			\toprule
			$v_1$ & $v_2$ \\
			\midrule
			$b$ & $c$  \\
			$c$ & $d$  \\
			$a$ & $b$  \\
			$d$ & $a$  \\
			\bottomrule
		\end{tabular}
	\end{minipage}
	\quad\:
	\raisebox{-10pt}{\scalebox{1.5}{$\mapsto$}}
	\qquad
	\begin{minipage}{0.56\columnwidth}
		\begin{tikzpicture}
		[decoration=brace,
		highlight/.style={line width=11pt, black!35!blue!20},
		rowlabel/.style={anchor=base, text width=1.5cm}]
		\matrix (m) [matrix of math nodes,left delimiter={[},right delimiter={]}] {
			0 & 1 & 0 & 0 \\
			0 & 1 & 1 & 0 \\
			1 & 1 & 1 & 0 \\
			1 & 1 & 1 & 1 \\
			0 & 0 & 1 & 0 \\
			0 & 0 & 1 & 1 \\
			0 & 1 & 1 & 1 \\
			1 & 1 & 1 & 1 \\
		};
		\node [anchor=base] (a) at ($(m-1-1.north)+(0,5pt)$) {$a$};
		\node [anchor=base] (b) at ($(m-1-2.north)+(0,5pt)$) {$b$};
		\node [anchor=base] (c) at ($(m-1-3.north)+(0,5pt)$) {$c$};
		\node [anchor=base] (d) at ($(m-1-4.north)+(0,5pt)$) {$d$};
		\node [rowlabel] (row1) at ($(m-1-4.east)+(1.5cm,-3pt)$)  {$\{b\}$};
		\node [rowlabel] (row2) at ($(m-2-4.east)+(1.5cm,-3pt)$)  {$\{b,c\}$};
		\node [rowlabel] (row3) at ($(m-3-4.east)+(1.5cm,-3pt)$)  {$\{a,b,c\}$};
		\node [rowlabel] (row4) at ($(m-4-4.east)+(1.5cm,-3pt)$)  {$\{a,b,c,d\}$};
		\node [rowlabel] (row5) at ($(m-5-4.east)+(1.5cm,-3pt)$)  {$\{c\}$};
		\node [rowlabel] (row6) at ($(m-6-4.east)+(1.5cm,-3pt)$)  {$\{c,d\}$};
		\node [rowlabel] (row7) at ($(m-7-4.east)+(1.5cm,-3pt)$)  {$\{b,c,d\}$};
		\node [rowlabel] (row8) at ($(m-8-4.east)+(1.5cm,-3pt)$)  {$\{a,b,c,d\}$};
		\draw[decorate,transform canvas={xshift=-1.3em},thick] ($(m-4-1.south west)+(0,2pt)$) -- node[left=2pt] {$v_1$} (m-1-1.north west);
		\draw[decorate,transform canvas={xshift=-1.3em},thick] (m-8-1.south west) -- node[left=2pt] {$v_2$} ($(m-5-1.north west)+(0,-2pt)$);
		\begin{scope}[on background layer]
		\draw[highlight] (m-1-2.west) -- (m-1-2.east);
		\draw[highlight] (m-2-2.west) -- (m-2-3.east);
		\draw[highlight] (m-3-1.west) -- (m-3-3.east);
		\draw[highlight] (m-4-1.west) -- (m-4-4.east);
		\draw[highlight] (m-5-3.west) -- (m-5-3.east);
		\draw[highlight] (m-6-3.west) -- (m-6-4.east);
		\draw[highlight] (m-7-2.west) -- (m-7-4.east);
		\draw[highlight] (m-8-1.west) -- (m-8-4.east);
		\end{scope}		
		\end{tikzpicture}
	\end{minipage}
	\caption{Translation of single-peakedness of the profile $P$ into the consecutive ones property of the matrix $M_{\text{SP}}^P$: each row corresponds to a top-initial segment.}
	\label{fig:sp-to-c1p}
\end{figure}

\noindent
\textbf{Dichotomous Preferences}\quad
A weak order $\pref$ is \emph{dichotomous} if it partitions $A$ into at most two indifference classes $A_1 \succ A_2$. The alternatives in $A_1$ are said to be \emph{approved} by the voter $\pref$. On dichotomous preferences, the notion of single-peakedness essentially coincides with the consecutive ones property \cite{faliszewski2011shield}: there needs to be an ordering $\lhd$ of the alternative so that each approval set $A_1$ is an interval of $\lhd$. Thus, \citet{elkind2015structure} use the name \emph{Candidate Interval (CI)} for single-peakedness in this context.

\section{Proportional Approval Voting}

In this section, we will consider Proportional Approval Voting (PAV), a multi-winner voting rule defined for dichotomous (approval) preferences. A na\"{i}ve way to form a committee would be to select the $k$ alternatives with highest approval score, but this method tends to ignore minority candidates, and so is not representative \cite{aziz2017justified}. PAV attempts to fix this issue: it is based on maximising a sum of voters' utilities, where a voter $i$'s utility is a concave function of the number of candidates in the committee that $i$ approves of. The rule was first proposed by \citet{Thie95a}. In the general case, a winning committee under PAV is NP-hard to compute \cite{aziz2015computational}, even if each voter approves only 2 candidates and each candidate is approved by only 3 voters.

Let us define PAV formally. Each voter $i$ submits a set $v_i \subseteq C$ of \emph{approved} candidates (or, equivalently, a dichotomous weak order with $v_i \succ_i C \setminus v_i$). We aim to find a good committee $W\subseteq C$ of size $|W| = k$. The intuition behind Proportional Approval Voting (PAV) is that voters are happier with committees that contain more of their approved candidates, but that there are decreasing marginal returns to extra approved candidates in the committee. Concretely, each voter obtains a `utility' of 1 for the first approved candidate in $W$, of $\frac12$ for the second, of $\frac13$ for the third, and so on. 
The objective value of a committee $W\subseteq C$ is thus
\[ \sum_{i\in N} 1 + \frac12 + \frac13 + \cdots + \frac1{|W \cap v_i|}. \]
The choice of harmonic numbers might seem arbitrary, and one can more generally define a rule \wPAV\ where $\boldalpha \in \mathbb R_+^k$ is a non-increasing scoring vector (so $\alpha_i \ge \alpha_j$ when $i \ge j$). This rule gives $W$ the objective value $\sum_{i\in N} \sum_{\ell=1}^{|W \cap v_i|} \alpha_\ell$. Then PAV is just $(1,\frac12,\frac13,\dots,\frac1k)$-PAV. However, the choice of harmonic numbers is the only vector $\boldalpha$ that lets \wPAV\ satisfy an axiom called `extended justified representation' \cite{aziz2017justified}, making this a natural choice after all.

For single-peaked preferences, \citet{elkind2015structure} presented algorithms for finding an optimal PAV committee running in time $O(2^{s}nm)$ and $\operatorname{poly}(d,m,n,k^d)$, where $s$ is the maximum cardinality of the approval sets, and $d$ is the maximum number of voters that approve a given candidate. These algorithms, based on dynamic programming, are efficient if these parameters $s$ or $d$ are small. They also showed, by extending an algorithm of \citet{betzler2013computation}, that \wPAV\ is easy for scoring vectors $\boldalpha = (\alpha_1,\dots, \alpha_r, 0, 0, \dots)$ that are `truncated' in that they only have a constant number of non-zero values at the beginning. Finally, if one imposes further (restrictive) assumptions on the structure of the input preferences, they also found polynomial-time results. However, none of these algorithms could be extended to cover the general case for solving PAV and \wPAV\ for non-truncated scoring vectors. Our method, being very different from dynamic programming, can solve PAV in polynomial time if preferences are single-peaked.

Let us now give and analyse our IP formulation for PAV. This formulation has one binary variable $y_c$ for each candidate $c\in C$, indicating whether candidate $c$ is part of the committee. Constraint (2) requires that the committee contains exactly $k$ candidates. The binary variables $x_{i,\ell}$ indicate whether voter $i\in N$ approves of at least $\ell$ candidates in the committee; this interpretation is implemented by the constraints (3).
\[\arraycolsep=2pt\def\arraystretch{1.8}
\begin{array}{rrclc@{\qquad}lr}
  \text{maximise } & \multicolumn{3}{l}{\displaystyle\sum_{i \in N} \sum_{\ell \in [k]} \alpha_\ell \cdot x_{i,\ell}} && \multicolumn{2}{r}{\text{(PAV-IP)}} \\
  \text{subject to } 
  & \displaystyle\sum_{c\in C} y_{c\hphantom{,\ell}} &=& k &&& \text{(2)} \\
  & \displaystyle\sum_{\ell \in [k]} x_{i,\ell} &\le& \displaystyle\sum_{\mathclap{i \text{ approves } c}} \: y_c && \text{for } i \in N & \text{(3)} \\
  & x_{i,\ell} &\in& \{0,1\} && \text{for } i\in N,\: \ell\in[k] & \text{(4)} \\
  & y_{c\hphantom{,\ell}} &\in& \{0,1\} && \text{for } c\in C
\end{array}
\]

To help intuition, let us explicitly write down an example for PAV with harmonic weights, with 4 candidates $a,b,c,d$, target committee size $k = 2$, and two voters, where voter 1 approves $\{a,b,c\}$, and voter 2 approves $\{c,d\}$. This profile is single-peaked on the axis $a \lhd b \lhd c \lhd d$.
\[\arraycolsep=2pt\def\arraystretch{1.5}
\begin{array}{rrclclr}
\text{maximise } & \multicolumn{3}{l}{(x_{1,1} + \frac12 x_{1,2}) + (x_{2,1} + \frac12 x_{2,2})} && \multicolumn{2}{r}{\text{(PAV-IP')}} \\
\text{subject to } 
&  y_a + y_b + y_c + y_d &=& 2 &&& \text{(2)} \\
& x_{1,1} + x_{1,2} & \le& y_a + y_b + y_c && & \text{(3)} \\
& x_{2,1} + x_{2,2} & \le& y_c + y_d && & \text{(3)} \\
& \multicolumn{3}{c}{\text{all variables binary}}
\end{array}
\]
An optimal committee for this profile is, for example, $W = \{ c,d \}$. An optimum feasible solution would then set $(x_{1,1}, x_{1,2}, x_{2,1}, x_{2,2}) = (1, 0, 1, 1)$. We must set $x_{1,2} = 0$, because voter $1$ only approves of a single candidate in the committee. The resulting objective value is $1 + 0 + 1 + \frac12 = 2.5$, which is precisely the PAV value of the committee $\{c,d\}$.

We now argue formally that our formulation (PAV-ILP) captures the winner determination problem of \wPAV.

\begin{proposition}
	Program \textup{(PAV-IP)} correctly computes an optimal committee according to \wPAV.
\end{proposition}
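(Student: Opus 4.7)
The plan is to show a two-way correspondence: every optimal committee $W$ under $\wPAV$ yields a feasible IP solution of matching objective value, and every optimal IP solution yields a committee of matching $\wPAV$ score. The $y$ variables are the obvious encoding of a committee via its indicator vector, and constraint (2) forces exactly $k$ candidates to be selected, so feasible $y$-assignments are in bijection with size-$k$ committees $W \subseteq C$. The work is therefore to understand what the optimal $x$ looks like once $y$ (equivalently $W$) is fixed.

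Fix any feasible $y$, corresponding to a committee $W$, and set $s_i := |W \cap v_i| = \sum_{c \text{ approved by } i} y_c$. For each voter $i$ constraint (3) reduces to $\sum_{\ell\in[k]} x_{i,\ell} \le s_i$, and the $x_{i,\ell}$'s across different voters are decoupled. Maximising $\sum_{\ell} \alpha_\ell x_{i,\ell}$ over $x_{i,\ell}\in\{0,1\}$ subject to $\sum_\ell x_{i,\ell} \le s_i$ is a trivial combinatorial problem: because $\boldalpha$ is non-increasing, an optimal choice is to set $x_{i,1} = \cdots = x_{i,s_i} = 1$ and the remaining $x_{i,\ell} = 0$, giving voter $i$'s contribution exactly $\sum_{\ell=1}^{s_i}\alpha_\ell$, which is precisely the $\wPAV$ utility the voter derives from $W$.

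Summing over voters, the maximum IP objective value attainable for the fixed $y$ equals the $\wPAV$ score of the committee $W$ encoded by $y$. Conversely, given any committee $W$ of size $k$, setting $y$ to its indicator and $x_{i,\ell}=1$ iff $\ell \le |W \cap v_i|$ is feasible and attains this same value. Hence the optimum of (PAV-IP) equals $\max_{|W|=k}\sum_i \sum_{\ell=1}^{|W\cap v_i|}\alpha_\ell$, and recovering a winning committee from an optimal IP solution is immediate by reading off the $y$ variables.

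There is no real obstacle here; the only point worth flagging is that the formulation works precisely because $\boldalpha$ is non-increasing — otherwise the greedy filling of $x_{i,1},\dots,x_{i,s_i}$ would not be optimal and the IP could overcount. This monotonicity assumption is built into the definition of $\wPAV$, so the argument goes through as stated.
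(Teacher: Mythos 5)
Your proof is correct and takes essentially the same approach as the paper: fix the committee $W$ encoded by $y$, note that the $x_{i,\ell}$ decouple across voters, and use that $\boldalpha$ is non-negative and non-increasing to conclude that the per-voter optimum is exactly $\sum_{\ell=1}^{|W\cap v_i|}\alpha_\ell$, the voter's \wPAV\ utility. Your write-up is a bit more explicit about the two-way correspondence between committees and feasible solutions, but the argument is the same.
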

\begin{proof}
	In any feasible solution of \textup{(PAV-IP)}, the variables $y_c$ encode a committee of size $k$. Fix such a committee $W = \{ c\in C : y_c = 1 \}$. We show that the optimum objective value of a feasible solution with these choices for the $y_c$-variables is the \wPAV-value of this committee. 
	
	Since $\boldalpha \ge 0$, in optimum, as many $x_{i, \ell}$ will be set to 1 as constraint (3) allows. Thus, for each $i$, exactly $|W \cap v_i|$ many variables $x_{i,\ell}$ will be set to 1. Since $\mathbf w$ is non-increasing, wlog, in optimum, these will be variables $x_{i, 1},\dots x_{i, |W \cap v_i|}$. Then the objective value equals the \wPAV-value of $W$.
\end{proof}

The constraint matrix of our example (PAV-ILP') is
\[
A_{\text{PAV-IP'}} =\!\!\! \kbordermatrix{
& x_{1,1} & x_{1,2} & x_{2,1} & x_{2,2} & y_a & y_b & y_c & y_d \\
& 0 & 0 & 0 & 0 & 1 & 1 & 1 & 1 \\
& -1 & -1 & 0 & 0 & 1 & 1 & 1 & 0 \\
& 0 & 0 & -1 & -1 & 0 & 0 & 1 & 1
}
\]
Here we have rewritten constraint (3) as $-x_{1,1} - x_{1,2} + y_a +y_b +y_c \ge 0$.
Note that, because the input profile is single-peaked, the submatrix corresponding to the $y_c$-variables has the consecutive ones property: it is just $M_{\text{SP}}^P$ with an all-$1$s row added. Thus, this submatrix is totally unimodular. The whole constraint matrix is the obtained by adding negative unit columns corresponding to the $x_{i,\ell}$-variables. By Proposition~\ref{prop:tum-manipulations}, adding these columns yields a matrix that is still totally unimodular. Generalising this argument to arbitrary profiles, we get the following.

\begin{proposition}
	\label{prop:pav-tum}
	The constraint matrix of \textup{(PAV-IP)} is totally unimodular when the input preferences are single-peaked.
\end{proposition}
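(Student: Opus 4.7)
The plan is to verify TUM directly: given an arbitrary square submatrix $B$ of the constraint matrix, show that $\det B \in \{-1,0,1\}$. The key structural observation is that each $x$-column is very degenerate---for fixed voter $i$, the $k$ columns $x_{i,1},\dots,x_{i,k}$ are identical, each carrying a single $1$ in voter~$i$'s row (from constraint (3)) and zeros elsewhere. In particular, the $x$'s contribute nothing to constraint (2).

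First I would dispose of two easy cases. If $B$ contains two $x$-columns for the same voter they are equal, so $\det B = 0$; if $B$ contains some $x_{i,\ell}$ but not row $i$, the column is zero in $B$, so again $\det B = 0$. In the remaining case every $x$-column in $B$ has its matching voter row also in $B$, and I would iteratively Laplace-expand along each such column in turn: each expansion strips off one $x$-column together with its matching voter row while preserving $|\det B|$, and after exhausting all $x$-columns we are left with a square submatrix $B'$ whose columns are entirely $y$-columns.

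The shape of $B'$ is exactly what the CI hypothesis is designed to handle: its rows consist of at most one all-ones row coming from constraint (2) together with some voter rows, where voter~$i$ contributes $-1$ in column $y_c$ iff $i$ approves $c$ and $0$ otherwise. Negating the voter rows preserves $|\det B'|$ and turns $B'$ into a row-and-column submatrix of $\bigl(\begin{smallmatrix}\mathbf{1}^T\\ A\end{smallmatrix}\bigr)$, where $A$ is the voter-by-candidate approval incidence matrix. By definition, CI is the statement that $A$ has the consecutive ones property; prepending an all-ones row preserves C1P (its support is trivially an interval), and taking row/column submatrices preserves it, so $B'$ is TU by the binary-C1P proposition. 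Hence $|\det B'|\in\{0,1\}$, and so $\det B\in\{-1,0,1\}$. The bound constraints $0\le x_{i,\ell}, y_c \le 1$ can be appended without affecting TUM by \Cref{prop:tum-manipulations}(3,4). The main obstacle is the first structural move---seeing that the $x$-columns can be peeled off by Laplace expansion, reducing to a problem phrased purely in terms of the approval matrix; after that, everything follows from the C1P toolkit already in the preliminaries.
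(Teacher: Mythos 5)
Your proof is correct and follows essentially the same route as the paper's: you identify the $x$-columns as (negated) unit columns and the $y$-block together with the cardinality row as a consecutive-ones (hence totally unimodular) matrix, which is exactly the paper's decomposition. The only difference is that you peel off the unit columns by explicit Laplace expansion, which amounts to inlining the proof of the closure properties in Proposition~\ref{prop:tum-manipulations}(2)--(4) rather than citing them.
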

\begin{proof}
	We will use the manipulations allowed by Proposition~\ref{prop:tum-manipulations} liberally. In particular, it allows us to ignore the constraints $0 \le x_{i,\ell}, y_c \le 1$, and to ignore the difference between equality and inequality constraints. Thus, after permuting columns corresponding to variables $x_{i,\ell}$ so that they are sorted by $i$, the constraint matrix of (PAV-IP) is
	\vspace{6pt}
	\[ A_{\text{PAV-IP}} =
	\begin{bmatrix}
	\bovermat{$k$ times}{-I_n & \dots & -I_n} & M_{\text{SP}}^P \\
	\mathbf 0_n & \dots & \mathbf 0_n & \mathbf 1_m
	\end{bmatrix},
	\]
	where $I_n$ is the $n\times n$ identity matrix, and $\mathbf 1_m$ is an all-1s row vector with $m$ entries.
	
	If preferences $P$ are single-peaked, then $M_{\text{SP}}^P$ has the consecutive ones property, and this is also true after appending a row with all-1s. Thus, $\left[\begin{smallmatrix}
	M_{\text{SP}}^P \\ \mathbf 1_m
	\end{smallmatrix}\right]$ is totally unimodular. Applying Proposition~\ref{prop:tum-manipulations} repeatedly to append negations of unit columns, we obtain $A_{\text{PAV-IP}}$, which is thus totally unimodular.
\end{proof}

Using \Cref{thm:tum-is-easy}, that integer programming is easy for totally unimodular programs, we obtain our desired result.

\begin{theorem}
	\wPAV\ can be computed in polynomial time for single-peaked approval preferences.
\end{theorem}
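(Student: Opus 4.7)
The plan is to combine the two preceding propositions with Theorem~\ref{thm:tum-is-easy}; essentially everything has already been set up, so this final theorem will follow as a short corollary. First I would invoke Proposition~\ref{prop:pav-tum}: on any single-peaked input, the constraint matrix of \textup{(PAV-IP)} is totally unimodular. Next I would verify the remaining hypotheses of Theorem~\ref{thm:tum-is-easy}, namely that the right-hand side vector is integral (its only nontrivial entry is $k$, with the box constraints contributing $0$ and $1$) and that the objective vector $\boldalpha$ is rational. Applying Theorem~\ref{thm:tum-is-easy} then gives that the LP relaxation of \textup{(PAV-IP)} has an integral optimum, which is simultaneously optimal for the integer program.

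To conclude, I would observe that the LP has $O(nk+m)$ variables and $O(nk+m)$ constraints, and can therefore be solved in polynomial time by any standard LP algorithm that returns a vertex of the feasible polytope (e.g.\ the ellipsoid method, or interior-point methods with crossover). Total unimodularity forces such a vertex to be integer-valued, and the correctness proposition for \textup{(PAV-IP)} proved above guarantees that the committee $W = \{c \in C : y_c = 1\}$ read off the optimal solution is \wPAV-optimal.

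There is essentially no obstacle beyond what has already been done: Proposition~\ref{prop:pav-tum} carries the real mathematical content. The one pedantic point to check is that Theorem~\ref{thm:tum-is-easy} is stated for systems in canonical form $Ax \le b$, whereas \textup{(PAV-IP)} contains the equality constraint (2) and box constraints on the variables; but the remark immediately after Proposition~\ref{prop:tum-manipulations} explicitly extends the theorem to this richer setting, so no further argument is required. The conceptually pleasant feature, which I would highlight in passing, is that the algorithm never needs to recognise single-peakedness or compute an axis: the LP relaxation of \textup{(PAV-IP)} simply returns an integer optimum automatically whenever the input happens to be single-peaked.
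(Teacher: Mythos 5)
Your proposal is correct and follows exactly the paper's (unwritten, one-line) argument: the theorem is obtained by combining Proposition~\ref{prop:pav-tum} with Theorem~\ref{thm:tum-is-easy} (as extended by the remark after Proposition~\ref{prop:tum-manipulations} to handle equality and box constraints), plus the correctness proposition for \textup{(PAV-IP)}. The additional details you supply---integrality of the right-hand side, polynomial size of the LP, and needing a vertex solution---are accurate and simply make explicit what the paper leaves implicit.
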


An interesting feature of (PAV-IP) is that the integrality constraints (4) on the variables $x_{i, \ell}$ can be relaxed to just $0 \le x_{i, \ell} \le 1$; this does not change the objective value of the optimum solution. This is because in optimum, the quantity $\sum_{\ell} x_{i,\ell}$ is integral by (3), and it never pays to have one of the $x_{i,\ell}$ to be fractional, because this fractional amount can be shifted to $x_{i, \ell'}$ with $\ell' < \ell$ to get a (weakly) higher value.\footnote{A similar property is used by \citeauthor{bredereck2015elections} (2015, Thm 1).}
This observation might tell us that solving (PAV-IP) is relatively easy as it is ``close'' to being an LP, and this also seems to be true in practice. On the other hand, it is not necessarily beneficial to relax the integrality constraints (4) when passing (PAV-IP) to an IP solver: the presence of integrality constraints might nudge the solver to keep numerical integrality gaps smaller.
Of course, the point of this paper is to give another reason why (PAV-IP) is ``close'' to being an LP, namely when preferences are single-peaked.

\section{Chamberlin--Courant's Rule}

Now we leave the domain of dichotomous preferences, and consider the full generality of profiles of weak orders.
The definition of Chamberlin--Courant's rule \shortcite{cha-cou:j:cc} is based on the notion of having a \emph{representative} in the elected committee: each voter is represented by their favourite candidate in the committee, and voters are happier with more preferred representatives. Let $\mathbf w \in \mathbb N^m$ be a (non-increasing) scoring vector; the standard choice for $\mathbf w$ are Borda scores: $\mathbf w = (m, m-1, \dots, 2, 1)$. Let $P = (\pref_1, \dots, \pref_n)$ be a profile. Then the objective value of a committee $W\subseteq C$ according to Chamberlin--Courant's rule is
\[ \sum_{i\in N} \max \{ w_{\rank_i(c)} : c \in W \}.  \]
Chamberlin--Courant now returns any committee $W\subseteq C$ with $|W| = k$ that maximises this objective.

Betzler et al.\ \shortcite{betzler2013computation} gave a dynamic programming algorithm solving this problem if preferences are single-peaked. We now give an alternative algorithm that has the advantage of being extensible to OWA-based rules, as we will see in the next section.

Chamberlin--Courant can be seen as a (non-metric) facility location problem: each candidate $c\in C$ is a potential facility location, we are allowed to open exactly $k$ facilities, and the distance between customers and facilities are determined through $\mathbf w$. There is a standard integer programming formulation for this problem using binary variables $y_c$, denoting whether $c$ will be opened or not, and variables $x_{i,c}$, denoting whether facility $c$ will service voter $i$. However, this formulation is not totally unimodular, since it encodes the preferences in its objective function rather than in its constraints.

To construct our alternative IP formulation, we need another definition of the Chamberlin--Courant objective function based on maximising a number of \emph{points}. For expositional simplicity, let us take $\mathbf w$ to be Borda scores; other scoring rules can be obtained by weighting the points. Here is another way of thinking about the objective value as defined above: each voter $i$ can earn a point for each \emph{rank} in $i$'s preference order: for every rank $r\in[m]$, $i$ earns the point $x_{i,r}$ if there is a committee member $c\in W$ with $\rank_i(c) \ge r$. Then the number of points obtained in total equals the objective value: if $i$'s favourite committee member is in rank $r$, then $i$ will earn precisely $w_r = m - r + 1$ points, namely the points $x_{i,r}, x_{i, r+1}, \dots, x_{i, m}$. This view suggests the following integer programming formulation, where we put $w_r' = w_r - w_{r-1}$ and $w_1' = w_1$.
\[\arraycolsep=2pt\def\arraystretch{1.8}
\begin{array}{rrclc@{\qquad}l@{\:\:}r}
\text{maximise } & \multicolumn{3}{l}{\displaystyle\sum_{i \in N} \sum_{r \in [m]}  w'_r \cdot x_{i,r}}  && \multicolumn{2}{r}{\text{(CC-IP)}} \\
\text{subject to } 
& \displaystyle\sum_{c\in C} y_{c} &=& k && & \text{(2)} \\
& x_{i,r} &\le& \displaystyle\sum_{\mathclap{c \: : \: \rank(c) \ge r}} \: y_c && \text{for } i \in N,\: r\in[m] & \text{(3)}\\
& x_{i,r} &\in& \{0,1\} && \text{for } i\in N,\: r\in[m] \\
& y_{c\hphantom{,\ell}} &\in& \{0,1\} && \text{for } c\in C
\end{array}
\]

Again let us consider an illustrative instantiation of this program, based on Borda scores, committee size $k = 2$, and the profile from Figure~\ref{fig:sp-to-c1p}, with $b \succ_1 c \succ_1 a \succ_1 d$ and $c \succ_2 d \succ_2 b \succ_2 a$. The resulting program is as follows.
\[\arraycolsep=2pt\def\arraystretch{1.2}
\begin{array}{rrclclr}
\text{maximise } & \multicolumn{6}{l}{(x_{1,1} + x_{1,2} + x_{1,3} + x_{1,4})}
\\ &  \multicolumn{6}{l}{\hspace{1.73cm}{}+ (x_{2,1} + x_{2,2} + x_{2,3} + x_{2,4})} \\
\text{subject to } 
& y_a + y_b + y_c + y_d &=& 2 && & \text{(2)} \\
& x_{1,1} &\le&  y_b && & \text{(3)}\\
& x_{1,2} &\le&  y_b + y_c && & \text{(3)}\\
& x_{1,3} &\le&  y_a + y_b + y_c && & \text{(3)}\\
& x_{1,4} &\le&  y_a + y_b + y_c + y_d\: && & \text{(3)}\\
& x_{2,1} &\le&  y_c && & \text{(3)}\\
& x_{2,2} &\le&  y_c + y_d && & \text{(3)}\\
& x_{2,3} &\le&  y_b + y_c + y_d && & \text{(3)}\\
& x_{2,4} &\le&  y_a + y_b + y_c + y_d && & \text{(3)}\\
& \multicolumn{3}{c}{\text{all variables binary}}
\end{array}
\]
To understand the program, it is convenient to rephrase it in logical language: we can interpret ``$\le$'' as ``only if'', and ``$+$'' as ``or'' in constraints (3). For example, the second constraint of type (3) can be read as $x_{1,2}$ only if the committee contains either $b$ or $c$.

Clearly, for this profile, the committee $\{b,d\}$ is optimal, which allows us to set all the $x_{i,\ell}$-variables to 1, yielding objective value 8. If we were to take the committee size $k = 1$ (in which case Chamberlin--Courant is the same as the Borda count), then $\{c\}$ is the optimum committee. That committee forces us to set $x_{1,1} = 0$, but allows us to set all the others to $1$, yielding objective value 7, which is $c$'s Borda score. For general profiles, we can see correctness of the formulation as follows.

\begin{proposition}
	Program \textup{(CC-IP)} correctly computes an optimal committee according to $\mathbf w$-Chamberlin--Courant.
\end{proposition}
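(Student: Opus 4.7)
The plan is to follow the same template as the preceding PAV proposition: any feasible solution uses the binary $y_c$ variables together with constraint~(2) to encode a committee $W = \{c \in C : y_c = 1\}$ of size exactly $k$, so it only remains to show that the optimal setting of the $x_{i,r}$ variables makes the IP's objective coincide with the $\mathbf{w}$-Chamberlin--Courant score of $W$. Since constraint~(3) does not couple different voters, I would reduce everything to a per-voter subproblem and then sum.

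Fix a voter $i$ and a committee $W$, and set $r^* = \min_{c \in W} \rank_i(c)$, so that $i$'s Chamberlin--Courant contribution to the score of $W$ equals $w_{r^*}$. Constraint~(3) pins down precisely which $x_{i,r}$ may be set to $1$: exactly those ranks for which some committee member witnesses the rank threshold on the right-hand side. The engine of the argument is the telescoping identity implicit in the definition $w'_1 = w_1$ and $w'_r = w_r - w_{r-1}$: the partial sums of the $w'_r$ collapse to a single $w$-value, so that a contiguous block of $1$'s of the correct length for voter $i$ contributes exactly $w_{r^*}$ to the IP objective. The non-increasingness of $\mathbf{w}$ then makes any longer or shorter block weakly worse (the signs of the increments $w'_r$ are designed precisely so that the unique maximiser is the block dictated by $r^*$), so the optimum realises $i$'s Chamberlin--Courant contribution exactly. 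Summing over voters and then maximising over $y$ recovers an optimal committee.

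The only delicate step is this per-voter case analysis, where one must simultaneously verify that constraint~(3) admits the desired block of $1$'s (reducing to the observation that $i$'s favourite committee member witnesses the right-hand side throughout the block) and that the signs of the increments $w'_r$ rule out every alternative pattern. The accounting closely mirrors the one used in the PAV proof above, and analogous reformulations appear in earlier work such as \citet{betzler2013computation}; I therefore do not expect any conceptual obstacle beyond careful bookkeeping on the boundary ranks.
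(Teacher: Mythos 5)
Your argument is correct and takes essentially the same route as the paper: fix the committee $W$ encoded by the $y_c$, note that the per-voter, per-rank constraints (3) decouple, and use the telescoping of the increments $w_r'$ so that the forced optimal assignment of the $x_{i,r}$ contributes exactly $w_{r^*}$ for each voter. The only (cosmetic) difference is that the paper never reasons about ``longer or shorter blocks'' at all --- since each $x_{i,r}$ has its own upper-bound constraint and a nonnegative coefficient $w_r'$, the optimum is determined variable-by-variable (a ``longer block'' is infeasible rather than suboptimal) --- so your per-voter case analysis is sound but slightly more elaborate than necessary.
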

\begin{proof}
	In any feasible solution of \textup{(CC-IP)}, the variables $y_c$ encode a committee of size $k$. Fix such a committee $W = \{ c\in C : y_c = 1 \}$. We show that the optimum objective value of a feasible solution with these choices for the $y_c$-variables is the objective value of this committee according to $\mathbf w$-CC. 
	
	Since $\mathbf w' \ge 0$, in optimum, every $x_{i,r}$ will be set to $1$ if constraint (3) allows this. This is the case iff there is a committee member $c\in W$ with $\rank_i(c) \ge r$, i.e., iff the `point' $x_{i,r}$ is earned as described above. Thus, the objectives of (CC-IP) and Chamberlin--Courant coincide.
\end{proof}

Similarly to the case of PAV, the constraint matrix of our example instance can be obtained by taking the matrix in Figure~\ref{fig:sp-to-c1p}, appending a row of all-1s for constraint (2), and then adding negative unit columns.
By Proposition~\ref{prop:tum-manipulations}, the resulting matrix is totally unimodular, because the matrix $M_{\text{SP}}^P$ that we started out with has the consecutive ones property. More generally, the proof is as follows.

\begin{proposition}
	The constraint matrix of \textup{(CC-IP)} is totally unimodular when the input preferences are single-peaked.
\end{proposition}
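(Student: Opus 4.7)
The plan is to mimic, almost step for step, the argument used for (PAV-IP) in Proposition~\ref{prop:pav-tum}. First, invoke Proposition~\ref{prop:tum-manipulations} to discard the box constraints $0 \le x_{i,r}, y_c \le 1$ and to ignore the equality-versus-inequality distinction. Then permute columns so that all $x_{i,r}$ come first, sorted lexicographically by $(i,r)$, followed by the $y_c$. After these manipulations, the constraint matrix of (CC-IP) takes the block form
\[
A_{\text{CC-IP}} = \begin{bmatrix} I_{nm} & -\tilde{M} \\ \mathbf 0_{nm} & \mathbf 1_m \end{bmatrix},
\]
where $\tilde{M}$ is the $nm\times m$ binary matrix whose row indexed by $(i,r)$ is the incidence vector of the top-initial segment of $\pref_i$ appearing on the right-hand side of constraint~(3).

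The heart of the argument is the observation that every row of $\tilde{M}$ is a row of the Bartholdi--Trick matrix $M_{\text{SP}}^P$ defined in the Preliminaries (with possible repetitions, since several values of $r$ can produce the same top-initial segment when $\pref_i$ is a weak order). Single-peakedness of $P$ guarantees that $M_{\text{SP}}^P$ has the consecutive ones property under the axis ordering $\lhd$ of the columns, and this property is inherited by $\tilde{M}$ after applying the same column permutation. Appending the all-ones row $\mathbf 1_m$ preserves consecutive ones trivially, so the matrix $\bigl[\begin{smallmatrix}\tilde M \\ \mathbf 1_m\end{smallmatrix}\bigr]$ is totally unimodular.

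To reconstruct $A_{\text{CC-IP}}$, I would then negate the top $nm$ rows of this matrix (which only flips signs of determinants of square submatrices and so preserves total unimodularity), and then apply Proposition~\ref{prop:tum-manipulations}(3) and (4) to append the $nm$ unit columns forming $\bigl[\begin{smallmatrix}I_{nm}\\\mathbf 0\end{smallmatrix}\bigr]$ one at a time. Up to a harmless column permutation, this reconstructs $A_{\text{CC-IP}}$, and total unimodularity follows.

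The only step requiring care is the bookkeeping in setting up $\tilde{M}$: one has to recognise the set on the right-hand side of constraint~(3) as precisely a top-initial segment of $\pref_i$, so that the rows of $\tilde{M}$ genuinely appear among those of $M_{\text{SP}}^P$. Once that identification is made, everything else reduces to routine applications of Proposition~\ref{prop:tum-manipulations} and the consecutive ones characterisation of single-peaked profiles, and I do not anticipate any serious obstacle beyond the block decomposition above.
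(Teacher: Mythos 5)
Your proposal is correct and follows essentially the same route as the paper: identify the block structure of the constraint matrix, recognise the right-hand block as (rows of) the Bartholdi--Trick matrix $M_{\text{SP}}^P$, use the consecutive ones property from single-peakedness plus the appended all-ones row to get total unimodularity, and then restore the identity block via Proposition~\ref{prop:tum-manipulations}. Your version is slightly more careful about the bookkeeping (repeated rows of $\tilde M$ arising from weak orders, and the sign of the identity block), but these are cosmetic differences from the paper's argument.
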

\begin{proof}
	After similar simplification as in Proposition~\ref{prop:pav-tum} we see that the constraint matrix of (CC-IP) is
	\[ A_{\text{CC-IP}} =
	\begin{bmatrix}
	-I_{nm} & M_{\text{SP}}^P \\
	0 & \mathbf 1_m
	\end{bmatrix}.
	\]
	Again, if preferences $P$ are single-peaked, then $M_{\text{SP}}^P$ has the consecutive ones property, and this is also true after appending a row with all-1s. Thus, $\left[\begin{smallmatrix}
	M_{\text{SP}}^P \\ \mathbf 1_m
	\end{smallmatrix}\right]$ is totally unimodular. Applying Proposition~\ref{prop:tum-manipulations} repeatedly to append unit columns, we obtain $A_{\text{CC-IP}}$, which is thus totally unimodular.
\end{proof}

\begin{theorem}
	Chamberlin--Courant with score vector $\mathbf w$ can be solved in polynomial time for single-peaked preferences.
\end{theorem}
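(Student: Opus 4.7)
The plan is simply to chain together the two propositions just proved with \Cref{thm:tum-is-easy}. The first proposition established that (CC-IP) is a correct formulation: any feasible $\{0,1\}$-assignment encodes a size-$k$ committee $W$ whose (CC-IP)-objective value equals its $\mathbf{w}$-Chamberlin--Courant score, and maximising the objective therefore returns an optimal committee. The second proposition established that, when the input profile $P$ is single-peaked, the constraint matrix $A_{\text{CC-IP}}$ is totally unimodular.

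Given this, the bulk of the argument is to invoke \Cref{thm:tum-is-easy}. The right-hand side vector of (CC-IP) consists of the integer $k$ and the entries $0$ and $1$, hence is integral. Total unimodularity therefore guarantees that the LP relaxation obtained by replacing the integrality conditions on $x_{i,r}$ and $y_c$ with box constraints $0 \le x_{i,r}, y_c \le 1$ has an integral optimum which is also optimal for (CC-IP). This relaxation has $O(nm)$ variables and $O(nm)$ constraints, with all matrix entries in $\{-1,0,1\}$, so it can be solved in time polynomial in $n$, $m$, and the bit-size of $\mathbf{w}$ by any polynomial-time LP algorithm (the ellipsoid method suffices; in practice, the simplex method will also return a vertex optimum, which is integral by the same theorem).

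There is essentially no obstacle: the single-peaked assumption has already been used in the preceding proposition to establish total unimodularity, and the correctness of the formulation was established without any structural assumption, so the IP solver will return a valid Chamberlin--Courant winner whether or not the input is single-peaked and without any need to compute or even verify the axis. The only minor thing to note is that the solver does not need to recognise single-peakedness in advance: the LP relaxation will simply happen to have an integral optimum, and standard rounding/vertex-extraction within the LP solver recovers it.
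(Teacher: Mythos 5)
Your proposal is correct and follows exactly the paper's (implicit) route: the theorem is obtained by combining the correctness of (CC-IP), the total unimodularity of its constraint matrix under single-peakedness, and \Cref{thm:tum-is-easy}, so that the integral-right-hand-side LP relaxation already has an integral optimum computable in polynomial time. The extra observations you add (polynomial size of the LP, no need to recognise the axis) match the paper's discussion and introduce no gaps.
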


\section{OWA-based Rules}

When using Chamberlin--Courant, each voter is represented by exactly one committee member. In many applications, we may instead seek \emph{multirepresentation} \cite{skowron2015owa}: e.g., you might watch several of the movies offered by an inflight-entertainment system, and thus should be represented by several committee members. In such scenarios, Chamberlin--Courant might design a suboptimal committee; Skowron et al.\ \shortcite{skowron2015owa} introduce \emph{OWA-based} multi-winner rules as a more flexible alternative (see also \citeNP{faliszewski2016hierarchy}). OWA-based rules generalise both Chamberlin--Courant and PAV.

Another way of motivating OWA-based rules, as defined below, is by examining the contrast between the $k$-Borda rule and Chamberlin--Courant, a line of thought explored by \citet{paths}. The $k$-Borda rule returns the committee formed of the $k$-candidates whose Borda scores are highest. This rule is suitable if we aim to identify a collection of `excellent candidates', such as when shortlisting candidates for job interviews, or when finding finalists of various competitions, based on evaluations provided by judges. On the other hand, Chamberlin--Courants identifies committees that are \emph{diverse}, in that they represent as many of the voters as possible. In many real-life applications, we desire a compromise between these two extremes; for example, we might want our shortlist of excellent candidates to also be somewhat diverse. \citet{paths} propose the $t$-Borda rules, $t = 1,\dots, k$, as a family of rules that interpolate between $k$-Borda and Chamberlin--Courant. Under the $t$-Borda rule, a voter $i$'s utility is measured by the sum of the Borda scores of $i$'s top $t$ most-preferred members of the committee. Thus, Chamberlin--Courant is identical to $1$-Borda, and $k$-Borda is identical to, well, $k$-Borda.

Let us now formally define OWA-based rules.
Given a vector $\mathbf x \in \mathbb R^k$, a weight vector $\boldsymbol \alpha\in \mathbb R^k$ defines an \emph{ordered weighted average} (OWA) operator as follows: first, sort the entries of $\mathbf x$ into non-increasing order, so that $x_{\sigma(1)} \ge \dots \ge x_{\sigma(k)}$; second, apply the weights: the ordered weighted average of $\mathbf x$ with weights $\boldsymbol \alpha$ is given by $\boldalpha(\mathbf x) := \sum_{i=1}^k \alpha_ix_{\sigma(i)}$. For example, $\boldsymbol \alpha = (1,0,\dots,0)$ gives the maximum, and $\boldsymbol \alpha = (1,1,\dots,1)$ gives the sum of the numbers in $\mathbf x$.

Now, a scoring vector $\mathbf w \in \mathbb N^m$ and an OWA $\boldalpha$ define an OWA-based multi-winner rule as follows: Given a profile~$P$, the rule outputs a committee $W = \{c_1, \dots, c_k\}$ that maximises the objective value
\[ \sum_{i\in N} \boldalpha(w_{c_1}, \dots,w_{c_k}). \]
Thus, choosing $\boldalpha = (1,0,\dots,0)$ gives us $\mathbf w$-Chamberlin--Courant as a special case. Choosing $\boldalpha = (1,1,0,\dots,0)$ gives us 2-Borda, the analogue of Chamberlin--Courant where voters are represented by their favourite \emph{two} members of the committee. The OWA-based rules with $\boldalpha = (1,\frac12,\dots,\frac1k)$ and $\mathbf w = (1,0,\dots,0)$ gives us PAV, when given dichotomous preferences as input.  Thus, OWA-based rules generalise both Chamberlin--Courant and PAV, and it turns out that we can apply our method to these rules by merging the ideas of (PAV-IP) and (CC-IP). However, our formulation is only valid for \emph{non-increasing} OWA vectors with $\alpha_i \ge \alpha_j$ whenever $i \ge j$. For example, this excludes the rule where voters are represented by their \emph{least}-favourite committee member.\footnote{Still, this case is also efficiently solvable in the single-peaked case: note that a voter's least-favourite committee members will be either the left-most or the right-most member of the committee; thus it suffices to consider committees of size 2. This idea can be extended to OWA operators $\boldalpha = (0,\dots,0,\alpha_{k-c},\dots,\alpha_k)$ that are zero except for constantly many values at the end.}

In the IP, we use variables $x_{i,\ell, r}$ indicating whether voter $i\in N$ ranks at least $\ell$ candidates in the committee in rank $r$ or higher. We again put $w_r' = w_r - w_{r-1}$ and $w_1' = w_1$.
\[\arraycolsep=2pt\def\arraystretch{1.8}
\begin{array}{rrclc@{\quad}lr}
\text{maximise} & \multicolumn{6}{l}{\displaystyle\sum_{i \in N\hphantom{]}} \sum_{\ell \in [k]} \sum_{r\in [m]} \alpha_\ell \cdot w_r' \cdot x_{i,\ell,r} \hfill \text{(OWA-IP)}} \\
\text{subject to} 
& \displaystyle\sum_{c\in C} y_{c\hphantom{,\ell,r}} &=& k &&& \text{(2)} \\
& \displaystyle\sum_{\ell \in [k]} x_{i,\ell, r} &\le& \displaystyle\sum_{\mathclap{c \: : \: \rank(c) \ge r}} \: y_c && \text{for } i \in N,\: r\in[m] & \text{(3)}\\
& x_{i,\ell,r} &\in& \{0,1\} && \text{for } i\in N,\: \ell, \: r & \text{(4)} \\
& y_{c\hphantom{,\ell,r}} &\in& \{0,1\} && \text{for } c\in C
\end{array}
\]

\begin{proposition}
	If $\boldalpha$ and $\mathbf w$ are non-increasing, \textup{(OWA-IP)} correctly computes an optimal committee according to the OWA-based rule based on $\boldalpha$ and $\mathbf w$.
\end{proposition}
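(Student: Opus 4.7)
The plan is to follow the template of the proofs for (PAV-IP) and (CC-IP), combining their ideas: the index $\ell$ tracks the OWA position (as in PAV-IP), while the index $r$ supports a telescoping identity (as in CC-IP). As before, the $y_c$'s in any feasible integer solution encode a committee $W$ of size $k$; I will show that, for any fixed such $y$, the maximum value of the objective over the $x$ variables equals the OWA value of $W$ for voter $i$, summed over $i$. Since the $x$-constraints are separable across voters, it suffices to analyse one voter $i$ at a time.

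For a fixed voter $i$, let $r_1 \le \cdots \le r_k$ be the sorted ranks of committee members under $\succ_i$, so the voter's OWA contribution is $\sum_\ell \alpha_\ell w_{r_\ell}$. Let $b_r := |\{c \in W : \rank_i(c) \ge r\}|$ be the right-hand side of constraint (3). The central claim is that in an optimal solution, $x_{i,\ell,r}=1$ iff $\ell \le b_r$ (equivalently, iff $r_\ell \ge r$). To establish this I would use two observations: (a)~since $\boldalpha$ is non-increasing and $\mathbf w' \ge 0$, an exchange argument in the spirit of the PAV-IP proof shows that for each $r$ the $1$'s among $\{x_{i,\ell,r}\}_\ell$ should be placed at the smallest available $\ell$'s, so the set $\{\ell:x_{i,\ell,r}=1\}$ is an initial segment $\{1,\dots,n_r\}$; and (b)~since $\mathbf w' \ge 0$ (as in CC-IP), the objective pushes $n_r$ up to the constraint's upper bound $b_r$.

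Given the claim, the telescoping identity from the CC-IP proof gives $\sum_r w_r' x_{i,\ell,r} = w_{r_\ell}$ for each $\ell$, so position $\ell$ contributes $\alpha_\ell w_{r_\ell}$ to the objective; summing over $\ell$ and $i$ recovers the OWA value of $W$ exactly. The step I expect to be the main obstacle is verifying the combinatorial equivalence ``$\ell \le b_r$ iff the $\ell$-th most preferred committee member of $i$ has $\rank_i \ge r$'', which is what interlocks the PAV-IP structure (indexed by $\ell$) with the CC-IP structure (indexed by $r$) so as to reconstruct OWA's sort-then-weight operation. The non-increasingness of both $\boldalpha$ and $\mathbf w$ is essential: the former makes the exchange argument prefer smaller-$\ell$ slots (matching the OWA's assumption that higher-ranked positions carry heavier weights), and the latter ensures $\mathbf w' \ge 0$ so that the constraint $\sum_\ell x_{i,\ell,r} \le b_r$ is tight at optimum and the telescoping collapses correctly.
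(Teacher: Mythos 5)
Your proposal is correct and follows essentially the same route as the paper, whose own argument is only a two-line sketch asserting exactly your central claim (that at optimum $x_{i,\ell,r}=1$ iff $W$ contains at least $\ell$ candidates ranked at $r$ or above by voter $i$) and concluding that the objectives agree. Your exchange argument for pushing the $1$'s to the smallest $\ell$'s, the tightness of constraint (3), and the telescoping of $w'_r$ are precisely the details the paper leaves implicit, so this is a faithful (and more complete) rendering of the intended proof.
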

\begin{proof}[Proof sketch]
	Similarly to previous arguments, in optimum, we will have $x_{i,\ell,r} = 1$ if and only if the committee $W = \{ c\in C : y_c = 1 \}$ contains at least $\ell$ candidates that voter $i$ ranks in rank $r$ or above. Thus, the objective value of (OWA-IP) agrees with the defined objective of the OWA-based rule.
\end{proof}

The following property is proved very similarly to before: the constraint matrix is obtained from $M_{\text{SP}}^P$ by appending unit columns.

\begin{proposition}
	The constraint matrix of \textup{(OWA-IP)} is totally unimodular when  input preferences are single-peaked.
\end{proposition}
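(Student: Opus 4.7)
The plan is to follow the template of Proposition~\ref{prop:pav-tum} and its (CC-IP) counterpart. The key structural observation is that in (OWA-IP) each variable $x_{i,\ell,r}$ appears only in the single row of constraint~(3) indexed by $(i,r)$, with coefficient $+1$, and is absent from constraint~(2); hence each $x$-column is a unit column supported at row $(i,r)$. Since $k$ different values of $\ell$ share the same $(i,r)$, these unit columns come in groups of $k$ duplicates. After ordering the $x$-variables by $\ell$ (outer) and then by $(i,r)$ (inner), and invoking Proposition~\ref{prop:tum-manipulations} to discard the unit-box constraints $0\le\cdot\le 1$ and the equality/inequality distinction, I would write the constraint matrix as
\[ A_{\text{OWA-IP}} =
\begin{bmatrix}
   \bovermat{$k$ times}{-I_{nm} & \dots & -I_{nm}} & M_{\text{SP}}^P \\
   \mathbf 0_{nm} & \dots & \mathbf 0_{nm} & \mathbf 1_m
\end{bmatrix}.
\]
This is structurally identical to $A_{\text{PAV-IP}}$ from the proof of Proposition~\ref{prop:pav-tum}, except that each identity block has size $nm$ rather than $n$, reflecting the fact that (OWA-IP)'s constraint~(3) is indexed by $(i,r)$ rather than just~$i$.

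Given the structural parallel, the remainder of the proof is essentially verbatim that of Proposition~\ref{prop:pav-tum}. Under single-peakedness $M_{\text{SP}}^P$ has the consecutive ones property, which is preserved when we append the all-1s row (itself a trivial axis interval), so $\left[\begin{smallmatrix} M_{\text{SP}}^P \\ \mathbf 1_m \end{smallmatrix}\right]$ is totally unimodular. I would then iterate items~(3) and~(4) of Proposition~\ref{prop:tum-manipulations} to append unit columns: at each step, append a fresh identity and then delete all but the desired column (in particular discarding the column supported on the bottom row). After $knm$ such appendings, with duplication as needed to realise the $k$ copies of each unit column, we recover $A_{\text{OWA-IP}}$.

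The only thing beyond the (PAV-IP) proof that requires care is the $x$-block structure, and this is essentially bookkeeping: one needs to confirm that a single constraint row per $(i,r)$ (rather than one per $(i,r,\ell)$) is precisely what makes every $x$-column a unit column, and that duplicate unit columns cause no trouble because Proposition~\ref{prop:tum-manipulations} may be applied iteratively. Once this is in place, no genuinely new combinatorial obstacle arises, and TUM of $A_{\text{OWA-IP}}$ follows by exactly the same chain of manipulations used previously.
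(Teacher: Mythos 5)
Your proposal is correct and follows exactly the route the paper intends: the paper omits the proof entirely, saying only that the property ``is proved very similarly to before,'' and your argument is precisely that expansion --- reduce the constraint matrix to $k$ copies of $-I_{nm}$ alongside $M_{\text{SP}}^P$ with an appended all-1s row, then peel off the (duplicated) unit columns via Proposition~\ref{prop:tum-manipulations}. The one point genuinely worth spelling out, which you do, is that constraint~(3) being indexed by $(i,r)$ rather than $(i,\ell,r)$ is what makes every $x$-column a unit column, so no new obstacle arises beyond the PAV/CC cases.
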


\begin{theorem}
	\label{thm:OWA}
	OWA-based rules with non-increasing OWA operator can be solved in polynomial time for single-peaked preferences.
\end{theorem}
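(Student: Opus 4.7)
The plan is to combine the preceding proposition on total unimodularity with Theorem~\ref{thm:tum-is-easy}: since (OWA-IP) is a correct formulation for non-increasing $\boldalpha$ and $\mathbf w$, and since any IP whose constraint matrix is totally unimodular is solved by its LP relaxation, any standard solver will return an optimal committee in polynomial time whenever the input profile is single-peaked. So the real content sits in the TU claim, which I would prove in direct analogy with (PAV-IP) and (CC-IP).

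For the TU step, I would first apply Proposition~\ref{prop:tum-manipulations} liberally to drop the $0/1$ bounds and to treat the budget equality as an inequality. After permuting the columns so that the $x_{i,\ell,r}$ variables are grouped by $\ell$, the constraint matrix takes the form
\[
A_{\text{OWA-IP}} =
\begin{bmatrix}
-I_{nm} & \dots & -I_{nm} & M_{\text{SP}}^P \\
\mathbf 0_{nm} & \dots & \mathbf 0_{nm} & \mathbf 1_m
\end{bmatrix},
\]
where there are $k$ copies of $-I_{nm}$, one per value of $\ell$, the $nm$ rows of the upper block are indexed by voter--rank pairs $(i,r)$, and the $y$-block is precisely $M_{\text{SP}}^P$: its $(i,r)$-row is the incidence vector of the top-initial segment $\{c : \rank_i(c) \ge r\}$ of $\pref_i$, which is exactly one of the rows used by the Barberà--Trick construction. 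From here the argument mirrors Proposition~\ref{prop:pav-tum}: single-peakedness of $P$ gives the consecutive ones property for $M_{\text{SP}}^P$, appending the all-ones budget row preserves it, and then repeated application of the operations in Proposition~\ref{prop:tum-manipulations} glues on the negated unit-column blocks to recover $A_{\text{OWA-IP}}$, which is therefore totally unimodular.

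The only point that needs a moment's care, and which I would flag as the main place to slip, is confirming that the rows of constraint~(3) line up \emph{bijectively} with the rows of $M_{\text{SP}}^P$ used to witness the consecutive ones property, with no extra or missing rows. This is immediate once one observes that both matrices are indexed by the same $(i,r)$ pairs. Given the TU conclusion, Theorem~\ref{thm:tum-is-easy} yields the desired polynomial-time bound, completing the argument.
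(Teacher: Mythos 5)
Your proposal is correct and follows exactly the route the paper intends: the paper states the total-unimodularity proposition for (OWA-IP) with only the remark that it is ``proved very similarly to before,'' and your block decomposition with $k$ copies of $-I_{nm}$ alongside $M_{\text{SP}}^P$ (plus the all-ones budget row) is precisely that argument, combined with Theorem~\ref{thm:tum-is-easy} as in the PAV and CC cases. In fact you supply more detail than the paper does, including the useful sanity check that the rows of constraint~(3) are indexed by the same $(i,r)$ pairs as the rows of $M_{\text{SP}}^P$.
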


Since Chamberlin--Courant and PAV are special cases of OWA-based rules, Theorem~\ref{thm:OWA} implies our previous results.

\section{Some Extensions}
\noindent
\textbf{More than Single-Peakedness.}
Our polynomial-time results apply to a slightly larger class than just single-peaked profiles: they also apply when $M^P_{\text{SP}}$ (with an all-1s row appended) is totally unimodular but does not necessarily have the consecutive ones property. It can be shown that this is the case whenever $P$ contains only two distinct voters, or, more generally, when the set of all top-initial segments of $P$ can also be induced by a two-voter profile. Together with single-peaked profiles, we conjecture that these classes of profiles are precisely the profiles for which the relevant constraint matrices are totally unimodular.
\smallskip

\noindent
\textbf{Single-Peaked on a Circle.}
\citet{peters2017spoc} introduce the notion of preferences single-peaked on a circle, a generalisation of single-peakedness. They show, by adapting our total unimodularity technique, that all the multiwinner rules we consider remain tractable even for their larger preference restriction. Notably, their argument requires running a recognition algorithm first, and then rearranging our ILP formulations in order to obtain total unimodularity.
\smallskip

\noindent
\textbf{Egalitarian versions.} We can obtain egalitarian versions of the multi-winner rules that we have discussed by replacing the sum over $N$ by a minimum in their objective values \cite{betzler2013computation}. For PAV and Chamberlin--Courant, our IP formulations can easily be adapted to answer the question ``is there a committee with egalitarian objective value $\ge L$?'' while preserving total unimodularity in the case of single-peaked preferences. An optimum committee can then be found by a binary search on $L$. However, it is unclear how this can be achieved for OWA-based rules. It is also unclear how to handle other utility aggregation operators such as leximin (see \citeNP{elkind2015owa}). \smallskip

\noindent
\textbf{PAV and Voter Intervals.} \citet{elkind2015structure} define an analogue of single-crossingness for dichotomous preferences called \emph{voter interval} (VI), which requires the \emph{transpose} of $M^P_{\text{SP}}$ to have the consecutive ones property. 
As for CI, they conjectured that PAV remains hard on VI preferences. We could not solve this problem using our method: the constraint $\sum_{c\in C} y_c = k$ of (PAV-IP) destroys total unimodularity. \smallskip

\noindent
\textbf{Monroe's rule.} \citet{monroe1995fully} proposed a version of Chamberlin--Courant in which each committee member should represent (roughly) the same number of voters. Since this rule (in a certain version) remains NP-complete to evaluate for single-peaked preferences (Betzler et al.\ \citeyear{betzler2013computation}), our approach will likely not extend to Monroe's rule.
\smallskip

\noindent
\textbf{Dodgson's rule.} An alternative is a \emph{Dodgson winner} if it can be made a Condorcet winner using a minimum number of swaps of adjacent alternatives. This number of swaps is the \emph{Dodgson score} of an alternative. \citet{bartholdi1989voting} give an IP formulation for this problem, which is also used in the treatment of \citet{caragiannis2009approximability}. Sadly, while `most' of the constraint matrix is again identical to $M^P_{\text{SP}}$, some extra constraints (saying that the swaps in each vote should only count once) destroy total unimodularity, so our method cannot be employed for this formulation. Note that while \citet{brandt2015bypassing} give an efficient algorithm for finding a Dodgson \emph{winner} in the case of single-peaked preferences, the problem of efficiently calculating \emph{scores} appears to be open and non-trivial. Maybe there is an alternative IP formulation that can be made to work using our approach. \smallskip

\noindent
\textbf{Kemeny's rule.} \citet{conitzer2006kemeny} present several IP formulations for Kemeny's rule. The poly-size formulation they give involves constraints enforcing transitivity of the Kemeny ranking; these constraints are not totally unimodular. In any case, most strategies for calculating Kemeny's rule first calculate all pairwise majority margins; we might as well check for transitivity at this stage -- trying to use fancy total unimodularity is unnecessary. \smallskip

\section{Conclusions}

We presented a new algorithmic technique that can evaluate some popular multi-winner voting rules in polynomial time when preferences are single-peaked. Interestingly, this approach works even if we have not checked in advance whether single-peakedness applies. In future work, it will be interesting to see how these formulations perform in practice, and to see whether other problems admit formulations of this type.
\smallskip

\noindent
\textbf{Acknowledgements.} 
I thank Martin Lackner and Piotr Skowron for helpful discussions and insights. I am supported by EPSRC and through ERC grant 639945 (ACCORD).

\bibliographystyle{plainnat}
{\small
}

\end{document}